\newcommand{\polydots}{\emph{Dots \& Polygons}\xspace}
\newcommand{\boxdots}{\emph{Dots \& Boxes}\xspace}
\newcommand{\simpledots}{\emph{Dots \& Simple Polygons}\xspace}
\newcommand{\holesdots}{\emph{Dots \& Polygons \& Holes}\xspace}
\newcommand{\B}{\ensuremath{\mathcal{B}}\xspace}
\newcommand{\R}{\ensuremath{\mathcal{R}}\xspace}
\newcommand{\SBf}{\ensuremath{S_{\mathit{Bfinal}}}\xspace}
\newcommand{\SRf}{\ensuremath{S_{\mathit{Rfinal}}}\xspace}
\newcommand{\Nchains}{\ensuremath{N_{\mathit{chains}}}\xspace}
\newcommand{\Ncycles}{\ensuremath{N_{\mathit{cycles}}}\xspace}
\newcommand{\Nunclaimed}{\ensuremath{N_{\mathit{unclaimed}}}\xspace}
\newcommand{\Abell}{\ensuremath{A_{\mathit{bell}}}\xspace}
\title{Dots \& Polygons}
\author{Kevin Buchin}{Department of Mathematics and Computer Science, TU Eindhoven, Netherlands}{k.a.buchin@tue.nl}{https://orcid.org/0000-0002-3022-7877}{}
\author{Mart Hagedoorn}{Department of Mathematics and Computer Science, TU Eindhoven, Netherlands}{}{}{} 
\author{Irina Kostitsyna}{Department of Mathematics and Computer Science, TU Eindhoven, Netherlands}{i.kostitsyna@tue.nl}{https://orcid.org/0000-0003-0544-2257}{}
\author{Max van Mulken}{Department of Mathematics and Computer Science, TU Eindhoven, Netherlands}{}{}{} 
\author{Jolan Rensen}{Department of Mathematics and Computer Science, TU Eindhoven, Netherlands}{}{}{} 
\author{Leo van Schooten}{Department of Mathematics and Computer Science, TU Eindhoven, Netherlands}{}{}{} 
\authorrunning{K. Buchin, M. Hagedoorn, I. Kostitsyna, M. v. Mulken, J. Rensen, L. v. Schooten}
\keywords{Dots \& Boxes, NP-hard, game, cycle packing}
\begin{document}
\nolinenumbers
\maketitle

\begin{abstract}
We present a new game, \polydots, played on a planar point set. Players take turns connecting two points, and when a player closes a (simple) polygon, the player scores its area. We show that deciding whether the game can be won from a given state, is NP-hard. We do so by a reduction from vertex-disjoint cycle packing in cubic planar graphs, including a self-contained reduction from planar 3-Satisfiability to this cycle-packing problem. This also provides a simple proof of the NP-hardness of the related game \boxdots. For points in convex position, we discuss a greedy strategy for \polydots.
\end{abstract}

\begin{figure}[h]
    \centering
    \includegraphics[width=\textwidth]{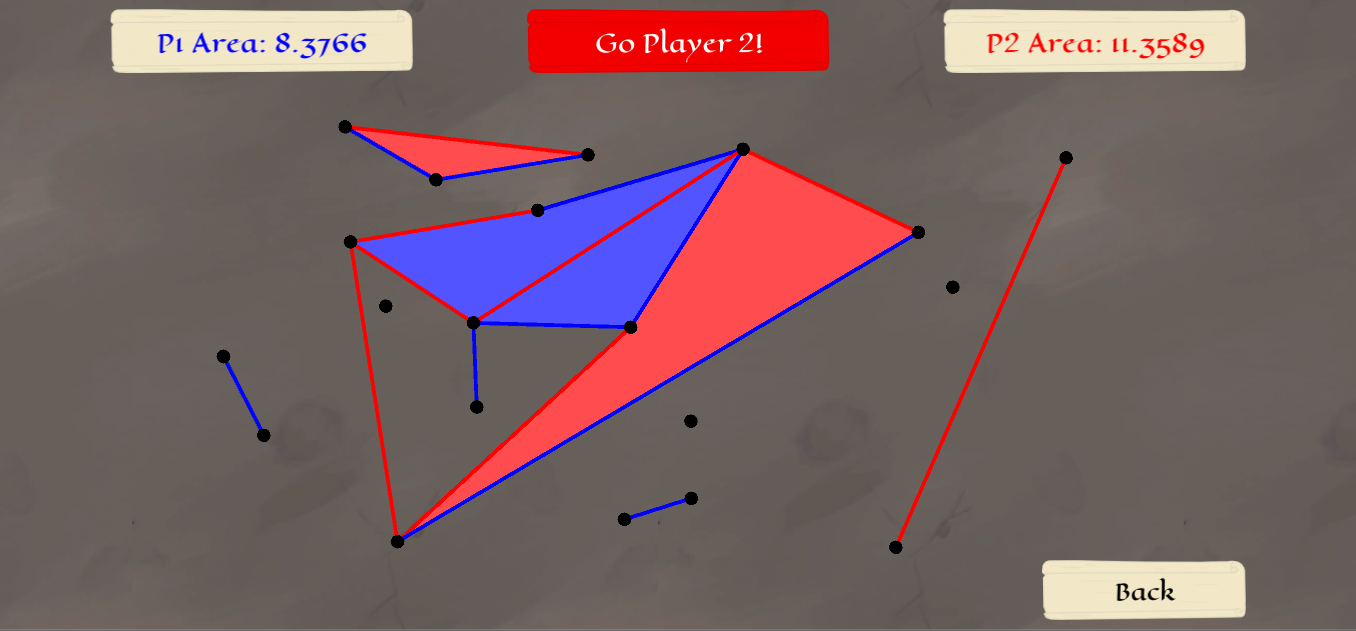}
    \caption{A screenshot of an intermediate game state in the \simpledots game.}
    \label{fig:screenshot}
\end{figure}

\section{Introduction}
\label{introduction}

\boxdots~\cite{berlekamp2000dots} is a popular game, in which two players take turns in connecting nodes lying on the integer lattice, scoring when they surround unit squares. We introduce a more geometric variant of this game: \polydots.

The game is played on a planar point set $P$ of size $n$. Two players, \R and \B (player \R always goes first), take turns connecting two points $p,q \in P$ by a straight-line edge in a turn.
The edge may not intersect other points or edges, and may not lie in a previously scored area.
When a player closes a polygon, they score the area of the polygon 
 and must make another move. When all area of the convex hull of $P$ has been scored, the player with the larger total area wins.

 We distinguish two variants of the game. 
  In \holesdots, when a player closes a cycle, the player scores the enclosed area (excluding possibly previously enclosed parts). In \simpledots, a player only scores when they close a simple polygon with no points inside. Figure~\ref{fig:variants} illustrates the difference between the variants.
 
\begin{figure}[t]
    \centering
    \includegraphics{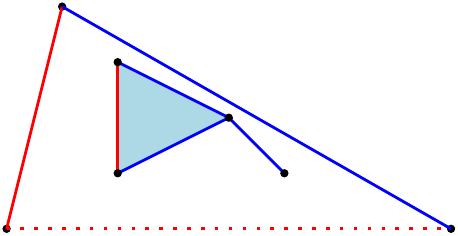}
    \caption{In \holesdots, if \R draws the dotted edge, they will score the interior minus the blue triangle. In \simpledots, \R will not score in this way in this turn, since the enclosed polygon is not simple.}
    \label{fig:variants}
\end{figure}

A similar game is \emph{Monochromatic Complete Triangulation Game}~\cite{aichholzer}, but in that game only triangles are scored, and the score is the number of triangles, rather than the area. We note that there is another variant of \boxdots also called Dots \& Polygons~\cite{DP-first} that is played on the integer lattice. 

\subparagraph*{Contributions.}
We implemented \polydots in the \emph{Ruler of the Plane} framework~\cite{ruler17}.
Both variants of the game can be played online (see supplementary materials).
 In Section~\ref{hardness} we prove that 
deciding whether \simpledots can be won from a given state is NP-hard. 
We do so by a reduction from vertex-disjoint cycle packing in cubic planar graphs, including a self-contained reduction from planar 3-Satisfiability to this cycle-packing problem, and from this cycle-packing problem to \boxdots. In Section~\ref{strategy} we discuss a greedy strategy for the case that $P$ is in convex position.

\section{Hardness}
\label{hardness}

In this section we show that \simpledots is NP-hard by a reduction from the maximum cycle packing problem in planar cubic graphs.
The reduction is similar to the proof of NP-hardness of \boxdots.
The book \emph{Winning Ways for your Mathematical Plays}~\cite{winning_ways} mentions that a generalization of \boxdots can be shown to be NP-hard by a reduction from the maximum vertex-disjoint cycle packing (VCP) problem.
The VCP problem can be viewed as a generalization of the triangle packing problem~\cite{Bodlaender92}, which is known to be NP-hard~\cite{Garey79}.

Eppstein notes that the NP-hardness, mentioned in~\cite{winning_ways}, should apply to the classic \boxdots by a reduction from the VCP problem in planar cubic graphs~\cite{Epp}. 
However, he does not cite a source of the hardness proof for this VCP variant.
Furthermore, triangle packing is polynomial-time solvable in planar graphs with maximum degree three~\cite{Caprara2002}, and thus can no longer be used to justify the hardness of the VCP in planar cubic graphs.
Thus, for the sake of completeness, we also show the following two theorems.

\begin{theorem}
\label{theo:vertexDisCycles}
Maximal vertex-disjoint cycle packing in planar cubic graphs is NP-complete.
\end{theorem}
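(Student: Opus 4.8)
\subparagraph*{Proof proposal.}

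We read the statement as the decision problem: given a planar cubic graph $G$ and an integer $k$, does $G$ admit $k$ pairwise vertex-disjoint cycles? Membership in NP is immediate, since a list of $k$ such cycles is a polynomial-size certificate that can be checked in polynomial time. The work lies in the hardness, which I would prove by a reduction from \emph{planar 3-Satisfiability}, a known NP-hard problem. Fix a planar 3-CNF formula $\varphi$ together with a planar embedding of its variable--clause incidence graph. The plan is to build, for each variable and each clause, a small planar gadget, wire the gadgets together along the embedding, pad the union to make it $3$-regular, and choose a threshold $k$ so that the target is met precisely when $\varphi$ is satisfiable.

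The \emph{variable gadget} for a variable $v$ occurring in $t_v$ clauses is a cyclic structure with exactly two locally maximum ways of packing its internal cycles: a ``true'' packing and a ``false'' packing, each using the same number $m_v$ of internal cycles. Each packing is arranged to leave a designated \emph{port} vertex free on the side of every occurrence of the satisfying literal, while occupying the port on the side of every falsified literal. The \emph{clause gadget} for a clause $c$ is a planar fragment attached to the three ports of the literals of $c$; it is designed so that one \emph{extra} cycle can be routed through it, via one of its three port connections, if and only if at least one of those three ports is left free by the corresponding variable gadget --- equivalently, if and only if $c$ is satisfied by the chosen assignment. Because $\varphi$ is planar, the wires between variable ports and clause gadgets follow the given embedding and no crossing gadget is needed; the $t_v$ ports of a variable are placed in the cyclic order dictated by the embedding. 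Taking $k := \sum_v m_v + (\text{number of clauses of }\varphi)$, any packing of size $\ge k$ is forced to realize one truth state in every variable gadget and the extra cycle in every clause gadget, which is jointly possible if and only if $\varphi$ is satisfiable; conversely a satisfying assignment yields such a packing directly.

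To obtain a \emph{cubic} graph rather than merely one of maximum degree three, I would design the gadgets to be $3$-regular away from their interface vertices, route the wiring so that interface vertices pair up into edges, and at the remaining low-degree vertices attach small rigid filler gadgets --- introduced in pairs so as to respect the handshake parity --- whose cycle contribution is the same in every maximum packing; this fixed amount is then folded into $k$. Throughout, each added piece is placed inside a face of the current embedding, preserving planarity. I expect the crux of the argument to be the analysis of the clause gadget: one must verify that a free port yields \emph{exactly one} additional vertex-disjoint cycle and that no alternative cycle ``leaks'' from the filler or clause gadgets that could beat the threshold $k$ in an unsatisfiable instance, nor be forced in a satisfiable one --- that is, that the correspondence between packing size and satisfiability is exact, not merely monotone. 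Establishing this tight count simultaneously with planarity and $3$-regularity is the main obstacle; the remaining bookkeeping (sizes of gadgets, the value of $k$, and the two directions of the equivalence) is routine.
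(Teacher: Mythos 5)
Your high-level plan coincides with the paper's: reduce from planar 3-Satisfiability, give each variable a gadget with exactly two maximum internal packings (true/false), route wires along the planar embedding, design a clause gadget that admits one extra cycle precisely when a port is free, and set the threshold to the sum of the forced gadget contributions plus the number of clauses. However, your writeup stops exactly where the proof has to do its work: you state the properties the gadgets must have, and then explicitly flag the construction of a clause gadget with the tight ``exactly one extra cycle'' count, compatible with planarity and degree three, as an unresolved obstacle. A reduction whose gadgets are specified only by their desired behaviour is a proof plan, not a proof, so as it stands there is a genuine gap.

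The paper closes that gap with a single uniform building block: $4$-cycles glued along edges. A wire is an open chain of $2k_s$ such $4$-cycles, so extracting $k_s$ vertex-disjoint cycles from it forces an alternating selection and the parity propagates along the wire; a variable gadget is a closed ring of $2k_v$ of them, whose two alternating selections (odd versus even positions) encode true and false; a clause gadget consists of three cycles in the middle that pairwise share an edge, each attached by a wire to the corresponding variable gadget, so that one middle cycle can be added to the packing if and only if at least one incident wire terminates in an unselected $4$-cycle, i.e., the clause is satisfied. Because everything is assembled from $4$-cycles sharing edges inside the faces of the given planar embedding, planarity and the degree bound come for free, and no separate ``filler'' or parity-repair gadgets of the kind you propose are needed. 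The counting is then exactly your $K_s+K_v+m$ threshold, with the observation that any deviation from the intended selections strictly loses cycles. In short: same architecture, but you would need to supply concrete gadgets with a verified cycle count before the argument is complete.
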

\begin{proof}
We reduce from the planar 3-Satisfiability~\cite{Lichtenstein82}.
Consider an instance of the planar 3-Satisfiability problem with $n$ variables $X=\{x_1,\dots,x_n\}$ and $m$ clauses $\mathcal{C}=\{C_1,\dots,C_m\}$.
We construct a graph $G$, corresponding to the 3-Satisfyability instance, in which a certain number of vertex disjoint cycles exists if and only if there exists an assignment of true/false values to the variables in $X$ such that all clauses in $\mathcal{C}$ are satisfied.

\begin{figure}[t]
\centering
\includegraphics{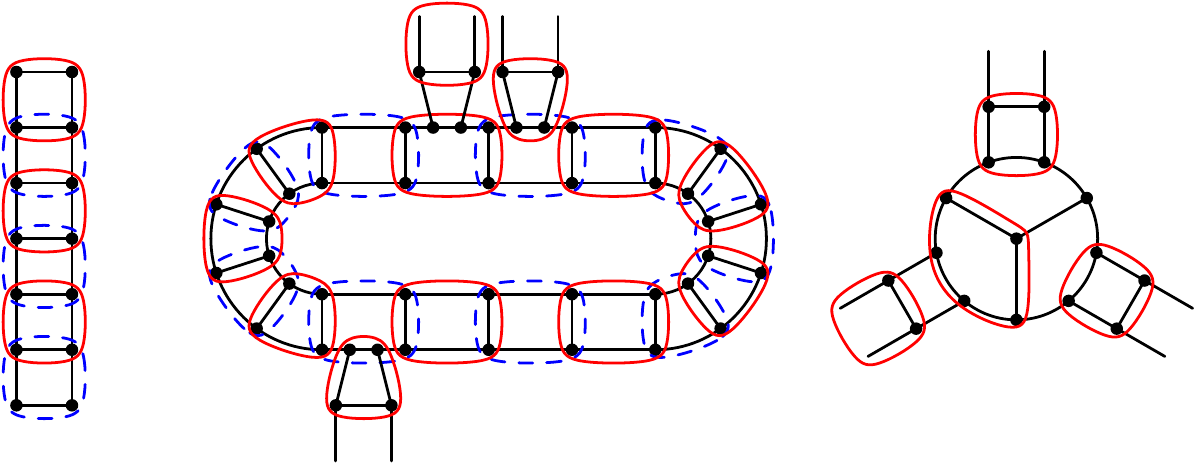}
\caption{Left: Wire gadget. Either blue or red cycles will be chosen; Middle: Vertex gadget. Red cycles correspond to the variable set to true, and blue---to false. Wire gadgets connect the variable to the corresponding clause gadgets; Right: clause gadget. To select at least one cycle from the middle of the gadget, at least one of the three branches attached has to end with a non-selected $4$-cycle.}
\label{fig:cycle-packing}
\end{figure}

A wire gadget, shown in Figure~\ref{fig:cycle-packing} (left), consists of a chain of $2k_s$ $4$-cycles for some value of $k_s$.
If at least $k_s$ cycles have to be chosen from the gadget, an alternating order of $4$-cycles must be selected.

For each vertex we create a cycle of $2k_v$ $4$-cycles for some value of $k_v$ (refer to Figure~\ref{fig:cycle-packing} (middle)), such that either all odd $4$-cycles or all even $4$-cycles must be chosen if we want to select at least $k_v$ cycles from the gadget.
One choice of alternating $4$-cycles will correspond to setting the variable to true, and the other---to false.

For each clause we create a clause gadget shown in Figure~\ref{fig:cycle-packing} (right).
The three cycles in the middle pairwise share an edge.
Each of them is connected by a wire gadget to a corresponding vertex gadget.
Thus, to select at least one cycle from the middle, one of the three wire chains attached must end with a non-selected $4$-cycle.

Let $2K_s$ be the total number of $4$-cycles used to construct the wire gadgets, $2K_v$ be the total number of $4$-cycles used to construct the variable gadgets.
The truth assignment to the $m$ clauses in $\mathcal{C}$ of the planar 3-Satisfiability instance corresponds to a selection of $K_s+K_v+m$ vertex-disjoint cycles in the resulting graph.
Note, that any selection of the cycles other than specified will lead to the number of vertex-disjoint cycles strictly less than $K_s+K_v+m$.

Thus vertex-disjoint cycle packing in planar cubic graphs is NP-hard. The problem is in NP, as the correct cycle selection can be verified in polynomial time.
\end{proof}

\begin{theorem}
\label{theo:dotsboxes}
Given a \emph{game state} of \boxdots, it is NP-hard to decide whether \B can win.
\end{theorem}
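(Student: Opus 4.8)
The plan is to reduce from the problem shown NP-complete in Theorem~\ref{theo:vertexDisCycles}: given a planar cubic graph $G=(V,E)$ and an integer $k_0$, decide whether $G$ contains $k_0$ pairwise vertex-disjoint cycles. From such an instance I would build, in polynomial time, a \boxdots game state in which \B can force a win if and only if $G$ admits $k_0$ vertex-disjoint cycles.

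I would rely on the classical endgame analysis of \boxdots~\cite{berlekamp2000dots,winning_ways}. Once the board is reduced to a disjoint union of \emph{long chains} (paths of at least three boxes) and \emph{loops} (cycles of at least four boxes) with no safe moves left, the player forced to open a component faces an opponent who can either swallow the whole component and then open the next one, or play the \emph{double-cross}: take all but two boxes of an opened chain, or all but four of an opened loop, declining those boxes to keep the first player on the move. Hence one player is ``in control,'' the other repeatedly opens, and—up to a bounded correction from the very last component—the opener's total is $2$ per long chain plus $4$ per loop. The quantitative point I want to exploit is that turning a chain into a loop shifts four boxes from the controller to the opener.

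Concretely, I would embed $G$ in the grid (feasible since $G$ is planar and cubic), replace each edge by a long corridor of boxes, and replace each degree-$3$ vertex by a small ``junction'' gadget. The gadgets are to be designed so that a single move at a junction routes its three incident corridors; a loop can close in the board precisely along a cycle of $G$ whose junctions are routed compatibly; loops in the board cannot overlap, so that a family of board-loops corresponds exactly to a family of vertex-disjoint cycles of $G$, all remaining corridor boxes forming long chains. Corridor lengths are padded so that every induced chain is long and every induced loop has length at least four. Finally I would attach a fixed family of isolated long chains of carefully chosen lengths, together with a single parity-filling edge, whose sole role is to convert the component count into the desired inequality, so that ``\B's area exceeds \R's area'' becomes equivalent to ``the endgame can be forced to contain at least $k_0$ loops''—which in turn holds if and only if $G$ has $k_0$ vertex-disjoint cycles.

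The crux, and the part I expect to need the most care, is the junction gadget together with the control bookkeeping: I must check that the routing moves are forced (or are safe moves that neither player profits from delaying), that the player who benefits from loops is precisely the one able to impose the routing realizing a maximum cycle packing, that every induced chain genuinely has length at least $3$ and every induced loop length at least $4$ so the double-cross calculus applies without exceptions, and that the last-component correction and the parity filler are absorbed so the threshold comes out exactly at $k_0$. Given these local checks, polynomial-time computability and correctness in both directions are routine, which yields the theorem.
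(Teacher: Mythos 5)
Your proposal follows essentially the same route as the paper: reduce from vertex-disjoint cycle packing in planar cubic graphs (Theorem~\ref{theo:vertexDisCycles}), embed the graph orthogonally in the grid with long box-corridors for edges and degree-3 junction cells, and invoke the double-cross endgame calculus in which the controlled player collects $2$ boxes per long chain and $4$ per loop, so that \B maximizes their score exactly by realizing a maximum cycle packing. The only substantive difference is that you add explicit padding chains and a parity filler to calibrate the winning threshold, a bookkeeping step the paper handles implicitly by directly comparing the maximized \SBf with \SRf.
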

\begin{proof}
Given a planar cubic graph $G$, we can construct a \emph{game state}\footnote{We define a \emph{game state} to be a specific configuration of a game board of \boxdots or \simpledots game. In it players \R and \B have a score corresponding to the amount of area that they have already captured.} of \boxdots with it being player \B's turn, such that \B can only win if they find the maximum number of vertex-disjoint cycles in $G$. Consider an orthogonal embedding $G'$ of $G$ on a grid. We place extra degree-2 vertices on each edge of $G$ such that a chain of at least four vertices exists between each pair of degree-3 vertices (Figure~\ref{fig:reduction} (left)).

In order to construct the \emph{game state}, we consider every vertex of $G'$ as a \boxdots cell and surround all edges by walls. This way each cell is only open along the sides corresponding to the incident edges of the vertex (Figure~\ref{fig:reduction} (middle)). 

\begin{figure}[t]
    \centering
    \includegraphics[page=4]{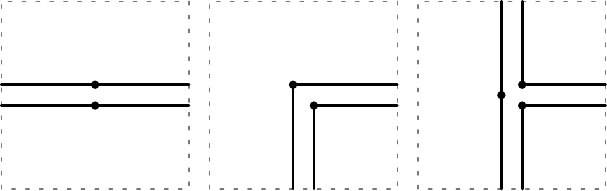}
    \caption{Left: an orthogonal embedding $G'$ of $G$ with extra nodes added to make long chains; Middle: corresponding \boxdots instance; Right: corresponding \simpledots instance.}
    \label{fig:reduction}
\end{figure}

In this state of the game any move that player \B can make allows \R to subsequently close off a sequence of boxes. We then say that in such a game state player \R is in \emph{control} of the game. In order for \R to retain \emph{control} of the game, they can perform a so called \emph{double-cross} move; after claiming part of a chain/cycle, they leave one (in case of a chain) or two (in case of a cycle) unclaimed $2\times 1$ rectangles (refer to Figure~\ref{fig:double-cross}). After such a move, player \B can claim these open rectangles, but \R will stay in \emph{control} of the game.

Let $S_B$ be the score gained by \B and $S_R$ be the score gained by \R leading up to the \emph{game state} represented by the construction based on graph $G'$. Note that the sum of these two scores is equal to the total number of claimed boxes. Since \R is in \emph{control} of the game in this \emph{game state}, \B is forced to continually opening chains and cycles for \R to claim. This means, however, that \B is able to divide the remaining unclaimed boxes into chains and cycles independent of \R. Assuming \R plays optimally, we can compute the final score $\SBf$ of \B using the following formula:
\[ \SBf = S_B + 2(\Nchains-1) + 4\Ncycles\,, \]
where $\Nchains$ is the number of chains and $\Ncycles$ is the number of cycles claimed by \R. If playing optimally, \R will make a double-cross move in all but the last chain/cycle they can close, which will give \B two boxes in the case of a chain of boxes, and four boxes in the case of a cycle of boxes. Since the area gained by \B is larger when a double-cross move is done in a cycle rather than in a chain, it is beneficial for \B to aim towards a \emph{game state} in which the number of cycles is maximized. Thus, in order to maximize their score, \B needs to find the maximum number of vertex-disjoint cycles in $G$. 

Then, in order to decide whether \B can win from a given \emph{game state} we need to find the difference between \SBf and \SRf, where \SRf is equal to the final score of \R, calculated as follows:
\[ \SRf = S_R + (\Nunclaimed - 2(\Nchains-1) - 4\Ncycles)\,, \]
where $\Nunclaimed$ is the number of unclaimed boxes in the \emph{game state} represented by the transformation from $G$. To decide whether \B can win we simply compare the maximized \SBf with \SRf. Thus, since the intermediate step of finding the maximized final score of \B is NP-hard, as proven in Theorem \ref{theo:vertexDisCycles}, it is NP-hard to decide whether \B can win.
\end{proof}

We now show a similar reduction for \simpledots.
\begin{theorem}
Given a state of \simpledots, it is NP-hard to decide whether \B can win.
\end{theorem}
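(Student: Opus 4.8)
The plan is to mimic the \boxdots reduction from Theorem~\ref{theo:dotsboxes}, but to realize the grid-based construction with an actual planar point set so that the combinatorial structure of chains and cycles carries over to \simpledots. Starting from an orthogonal embedding $G'$ of a planar cubic graph $G$ (with extra degree-2 vertices subdividing edges so that every edge is a long chain), I would place points of $P$ along the grid so that each ``cell'' of the \boxdots construction becomes a thin quadrilateral region, and the forced ``walls'' are realized by segments that players are compelled to draw (or, more simply, by placing points so close together that the only non-crossing, non-degenerate polygons available to a player are exactly the thin rectangle-like polygons corresponding to the boxes of the \boxdots instance). The key point is to engineer the geometry so that, just as in \boxdots, player \B is forced into a position where \R is ``in control'': every move \B makes opens up a chain or cycle of thin polygons that \R can then run down, and \R can perform the analogue of a double-cross (leaving two thin polygons in a chain, or the appropriate number in a cycle) to retain control while sacrificing a bounded amount of area.

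The second step is the bookkeeping. I would set up the construction so that each thin polygon has (essentially) the same area $\Abell$ — or, more carefully, so that areas are uniform enough that the score comparison is governed purely by the \emph{counts} of chains versus cycles, exactly as in the formulas $\SBf = S_B + 2(\Nchains-1)\Abell + 4\Ncycles\Abell$ and the corresponding expression for $\SRf$. Since in \simpledots a player only scores on closing a \emph{simple} polygon with no points inside, I must ensure that the ``cycle'' structures of $G$ translate into sequences of small simple polygons rather than one big polygon with holes; this is why the long-chain subdivision and the thin-quadrilateral encoding matter. Then deciding whether \B wins again reduces to maximizing $\Ncycles$, i.e.\ to the maximum vertex-disjoint cycle packing problem in $G$, which is NP-hard by Theorem~\ref{theo:vertexDisCycles}. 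Membership considerations are not needed since only hardness is claimed.

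The main obstacle I expect is the geometric realization of ``forced walls'' and the verification that \B truly has no escaping move. In \boxdots the board is rigid — missing edges are simply not drawable — whereas in \simpledots the point set is continuous and a clever player might try to draw a long straight edge that ``cuts across'' the intended structure, or close a large polygon that swallows several intended cells at once, or exploit collinearities. I would handle this by (i) perturbing points off of unwanted collinearities so that no long ``shortcut'' edge is legal (it would intersect other points or previously required edges), (ii) making the thin quadrilaterals so narrow and the connecting passages so tight that any simple empty polygon a player can close is forced to be one of the designated cells, and (iii) possibly prepending a short forced preamble of moves (absorbed into $S_B$ and $S_R$) that pins down the boundary structure. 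Establishing rigorously that this perturbed configuration admits \emph{exactly} the intended set of legal moves is the delicate part; once that is in place, the control argument and the score formulas transfer almost verbatim from the proof of Theorem~\ref{theo:dotsboxes}, and the reduction from vertex-disjoint cycle packing goes through, proving NP-hardness.
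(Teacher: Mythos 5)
Your proposal follows essentially the same route as the paper: reduce from vertex-disjoint cycle packing via the \boxdots game state of Theorem~\ref{theo:dotsboxes}, keep \R in control with double-cross moves, and make all scoreable regions have equal area so that the area-based winning condition reduces to counting chains and cycles. The paper resolves the ``delicate part'' you flag by turning each cell into a corridor and attaching a single bell-shaped region of fixed area \Abell to each edge of $G'$, with the bell's geometry blocking visibility (point $a$ does not see point $c$) so that no shortcut polygon can be closed; note also that the correct score increments are $(\Nchains-1)\Abell + 2\Ncycles\Abell$ (one bell sacrificed per chain double-cross, two per cycle), not the doubled coefficients carried over from the box counts.
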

\begin{proof}
Starting in a planar cubic graph $G$, we first construct an instance of \boxdots as described in the proof of Theorem \ref{theo:dotsboxes} where $G'$ is defined as in Theorem \ref{theo:dotsboxes}. Using this instance of \boxdots we can construct a \simpledots instance by considering each cell as a corridor connecting its open sides, and subsequently adding a bell-shaped area on each edge of $G'$ connecting two vertices, as shown in Figure \ref{fig:gadgets}. Such a construction for graph $G$ is shown in Figure \ref{fig:reduction} (right).

\begin{figure}[t]
    \centering
    \includegraphics[page=5]{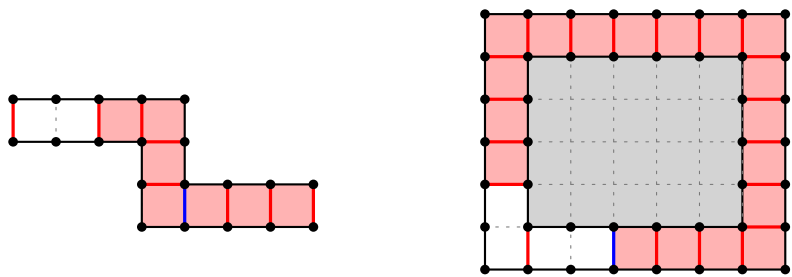}
    \caption{Double-cross move by \R. If \B opens a chain (or a cycle), \R can claim a sequence of boxes. To pass the turn back to \B, \R can leave two (or four) boxes unclaimed.}
    \label{fig:double-cross}
\end{figure}

The resulting \emph{game state} yields a situation in which \R is in \emph{control} of the game; any move \B can make allows \R to subsequently close off a sequence of simple polygons. Since the winning condition of \simpledots is in terms of total amount of area and not the total amount of polygons claimed, it is important to note that the area of all the bell-shaped polygons are equal. If \R plays optimally they would want to stay in \emph{control} of the game. Again, this means that they will play a double-cross move in all but the last chain/cycle they can close, meaning \B has to open up new cycles or chains containing more bell-shaped polygons for \R. The concept of a double-cross move in \simpledots is similar to a double-cross move in \boxdots (Figure \ref{fig:double-cross-poly}). Such a move will give two bell shaped areas to \B in a cycle and one bell shaped area in a chain.

Similarly to the proof of Theorem \ref{theo:dotsboxes} let $\Abell$ be the area of a bell shaped subpolygon, let $S_B$ be the score gained by \B and $S_R$ be the score gained by \R. The scores $S_R$ and $S_B$ lead up to the \emph{game state} represented by the transformation from graph $G'$ (Figure \ref{fig:reduction} (right)). Note that the sum of these two scores is equal to the area of the grey regions in Figure \ref{fig:reduction}. Assuming \R plays optimally, we can calculate the final score \SBf of \B:
\[ \SBf = S_B + (\Nchains-1) \Abell + 2 \Ncycles \Abell\,, \]
where \Nchains is the number of chains and \Ncycles is the number of cycles claimed by \R.
If playing optimally, \R will make a double-cross move in all but the last chain/cycle they can close, which will give \B one bell-shaped area in the case of a chain, and two bell-shaped areas in the case of a cycle. Since the area gained by \B is larger when a double-cross move is done in a cycle rather than in a chain, it is beneficial for \B to aim towards a \emph{game state} in which the number of cycles is maximized.
Thus, in order to maximize their score, \B needs to find the maximum number of vertex-disjoint cycles in $G$. 

Then, in order to decide whether \B can win from a given \emph{game state} we need to find the difference between \SBf and \SRf, where \SRf is equal to the final score of \R:
\[ \SRf = S_R + (\Nunclaimed - (\Nchains-1) \Abell - 2 \Ncycles \Abell)\,, \]
where $\Nunclaimed$ is amount of unclaimed area in the game state represented by the transformation from $G$. To decide whether \B can win we simply compare the maximized \SBf with \SRf. Thus, since the intermediate step of finding the maximized final score of \B is NP-hard, as proven in Theorem \ref{theo:vertexDisCycles}, it is NP-hard to decide whether \B can win.
%
\end{proof}

\begin{figure}[t]
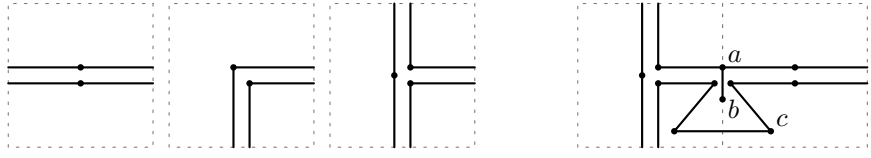

    \centering
    \includegraphics[page=1]{figures/dots-polygons-np-hard.pdf}\hfil
    \includegraphics[page=2]{figures/dots-polygons-np-hard.pdf}
    \caption{Gadgets: vertices (left) and edges (right). A bell-shaped area is added to the middle of each edge of $G'$ connecting two vertices. Point $a$ does not see point $c$.}
    \label{fig:gadgets}
\end{figure}

\begin{figure}[t]
    \centering
    \includegraphics[page=3]{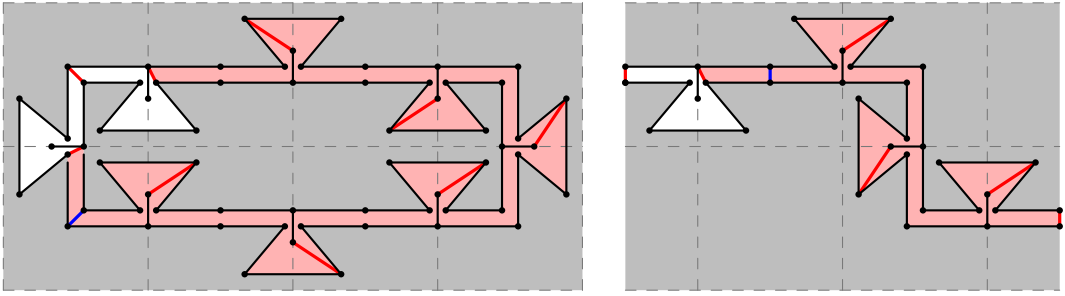}
    \caption{Double-cross move by \R after \B draws any diagonal in a cycle (left) or a chain (right). \R gives away two bell-shaped areas to \B in a cycle, and one bell-shaped area in a chain.}
    \label{fig:double-cross-poly}
\end{figure}

\section{Strategy}
\label{strategy}

In the following we discuss greedy strategies for \polydots played on a set of points $P$ in convex position. In this case, both variants of the game are the same. 
In the related \emph{Monochromatic Complete Triangulation Game} a greedy strategy is optimal for such points~\cite{aichholzer}.

We first observe that the number of turns is always the same.
\begin{lemma}
The number of turns in a game of \polydots played on a set of points in convex position is equal to $n = |P|$.
\end{lemma}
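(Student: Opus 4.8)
The plan is to read off the number of turns from the final board configuration, regarded as a plane subdivision of the convex hull, and then combine Euler's formula with a careful count of which moves pass the turn to the opponent. Recall that, since closing a polygon obliges the same player to move again, a single turn may comprise several moves; I write $T$ for the number of turns and $E$ for the total number of moves made in a completed game, so $E$ is also the number of drawn edges (each a segment between two points of $P$, no two segments crossing).

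The first step is to understand the final plane graph $G$. Since the game ends only once the whole convex hull has been scored, a neighbourhood of any hull edge $p_ip_{i+1}$ taken inside the hull lies in some scored polygon $Q$, and the boundary of $Q$ consists of drawn segments between points of $P$; as no point of $P$ lies on the hull edge other than its endpoints, the part of $\partial Q$ running along that edge is the full segment $p_ip_{i+1}$, so $p_ip_{i+1}$ is drawn. Hence $G$ contains the spanning cycle through all of $P$ in convex order, so $G$ is connected on $V=n$ vertices, and Euler's formula yields that $G$ has exactly $E-n+1$ bounded faces. Because a scored region is never subdivided again (no move may lie inside scored area) and every point of the hull is eventually scored, the bounded faces of $G$ are precisely the polygons scored during the game; in particular the total number of scoring events is $E-n+1$ --- provided no single move closes two polygons at once. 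This is where convexity enters: if a move $pq$ closed a polygon on each of its two sides, then just before the move the region those two polygons and $pq$ occupy was a face of the current graph that is itself a simple polygon with no point of $P$ inside, hence one that must already have been scored, contradicting that $pq$ is drawn in its interior. (Excluding degenerate, self-touching boundaries in this argument again uses convex position together with the non-crossing condition.)

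The final step is the bookkeeping. Call a move \emph{closing} if it completes a polygon (the mover then continues) and \emph{non-closing} otherwise (the turn passes). Within a turn every move is closing except possibly the last, and the game can end only right after a closing move, since while unscored area remains a legal move always exists (routine to check in convex position). Hence exactly the first $T-1$ turns end with a non-closing move and the last turn ends with a closing move, so there are $T-1$ non-closing moves in all, while the closing moves number $E-n+1$ by the previous step. Adding, $(T-1)+(E-n+1)=E$, i.e.\ $T=n$. I expect the main obstacle to be the geometric claim that no move closes more than one polygon (together with the subsidiary fact that a legal move exists as long as unscored area remains); both rest on the convex position of $P$ and the non-crossing condition and need a careful, though short, argument, while the rest is just Euler's formula and counting.
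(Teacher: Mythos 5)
Your argument is correct, but it reaches the count by a genuinely different route from the paper's. The paper never looks at the final board: it tracks connected components of the drawn edges, observes that a move joining two distinct components cannot create a cycle and so passes the turn, while a move inside one component closes a polygon and keeps the turn, and concludes that the number of turn-passing moves is the drop in component count from $n$ to $1$, giving $n$ turns. You instead work backwards from the completed subdivision, using Euler's formula to count $E-n+1$ bounded faces, identify these with the closing moves, and balance closing against non-closing moves. The two computations are two views of the same fact---in the connected final graph the $n-1$ non-closing moves are exactly the spanning-tree (component-merging) edges and the $E-n+1$ closing moves are the non-tree edges---but your version obliges you to establish several geometric side claims that the component argument never touches: that every hull edge is eventually drawn, that a legal move exists while unscored area remains, and, most importantly, that no single move closes two scored polygons. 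That last point is your one soft spot: you dismiss the ``pinched'' case (the two putative polygons sharing a vertex other than $p$ and $q$) with a parenthetical. It can be repaired cleanly: a simple polygon with straight sides whose vertices are in convex position must visit them in convex order and is therefore convex, so each of the two polygons lies in a closed half-plane bounded by the line through $pq$, and their boundary paths can meet only at $p$ and $q$. The paper's proof sidesteps all of this because it only needs to know \emph{whether} a move stays within one component, not \emph{how many} faces it creates; on the other hand, both proofs are equally informal about why a within-component move in convex position necessarily closes a simple, empty (hence scoring) polygon, which is again the convexity fact above.
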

\begin{proof}
Consider connected components of the edges drawn by the players.
If a player connects two points in the same component, this closes a polygon, and therefore the turn continues. If, however, the two points are in different components, the turn ends and the number of connected components decreases. Thus, the number of turns equals to the number of initial components.
\end{proof}

Consider a \emph{game state} in which the current player cannot close a polygon. Let $E$ be the set of all edges that can still be drawn. 
Define the weight $w(e)$ for $e \in E$ to be the area the opponent can claim on their next turn if the current player draws $e$. For example an edge $e$ between two isolated points has weight $w(e) = 0$.
A \emph{simple greedy strategy} is the following: if there is an edge that can close some area, immediately draw that edge. Otherwise, draw an arbitrary edge $e_{min} = \min\limits_{\forall e \in E} w(e)$. This strategy is not optimal, as shown in Figure~\ref{fig:bad_case}.
\begin{figure}[t]
    \centering
    \includegraphics{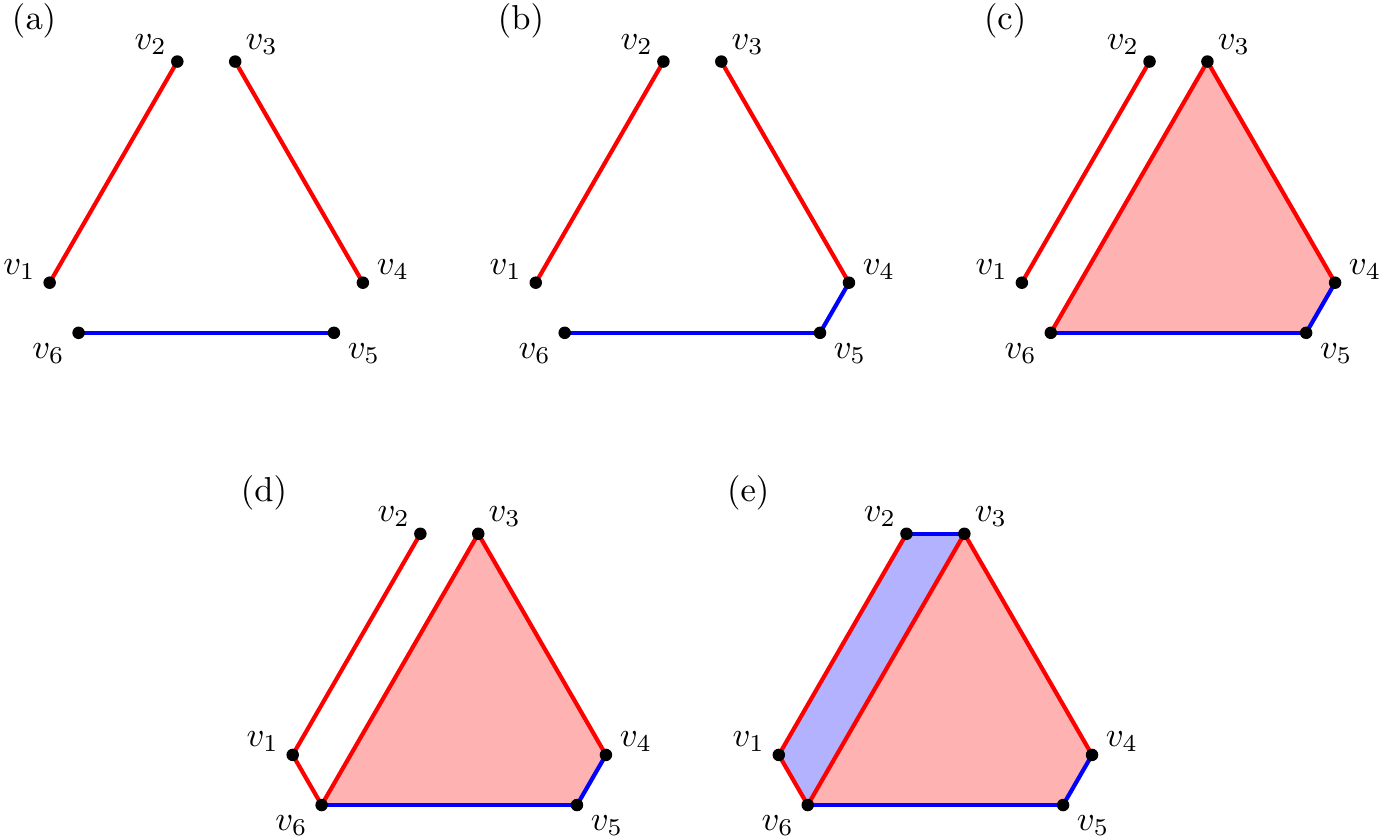}
    \caption{Player \B could have won this game, but after reaching the state in (a) loses as shown in (b--e).}
    \label{fig:bad_case}
\end{figure}

The edges drawn partition the remaining area into \emph{subproblems}. For an edge $e \in E$, $w(e)$ can only change if an edge in the same subproblem is drawn. Let $E'\subset E$ be the set of edges within a subproblem. We call a subproblem \emph{easy}, if only two of the edges $e, e' \in E'$ lie on the convex hull of $P$.
In such a subproblem, all edges have the same weight, namely the area of the subproblem. We call a \emph{game state} in which all subproblems are easy, an \emph{easy endgame}.

We will show that a player that goes last can win the game following the simple greedy strategy, if they can enforce an easy endgame.
As an example, consider an easy endgame state on a set of $n=6$ points shown in Figure~\ref{fig:exampleGame}~(a), which \B can force by drawing the diagonal $v_2 v_5$ (or an equivalent diagonal up to symmetry) in their first move, if it does not yet exist.
Recall that player \R goes first, and in this specific case \B goes last.
After the first three moves shown in the figure, it is \B's turn again.
Player \B has two choices, either draw an edge that gives away the quadrilateral $\Box v_1 v_2 v_5 v_6$, or give away the quadrilateral  $\Box v_2 v_3 v_4 v_5$.
An edge with the minimal weight would be the edge that gives away the least amount of area. In this case that would be either edge $v_2 v_3$, $v_2 v_4$, $v_3 v_5$ or $v_1 v_4$.
In the next turn \R can claim the smaller quadrilateral area, and then \R must give away the larger quadrilateral area to \B, resulting in a win for \B (refer to Figure~\ref{fig:exampleGame}~(b--e)).

\begin{figure}[t]
    \centering
    \includegraphics{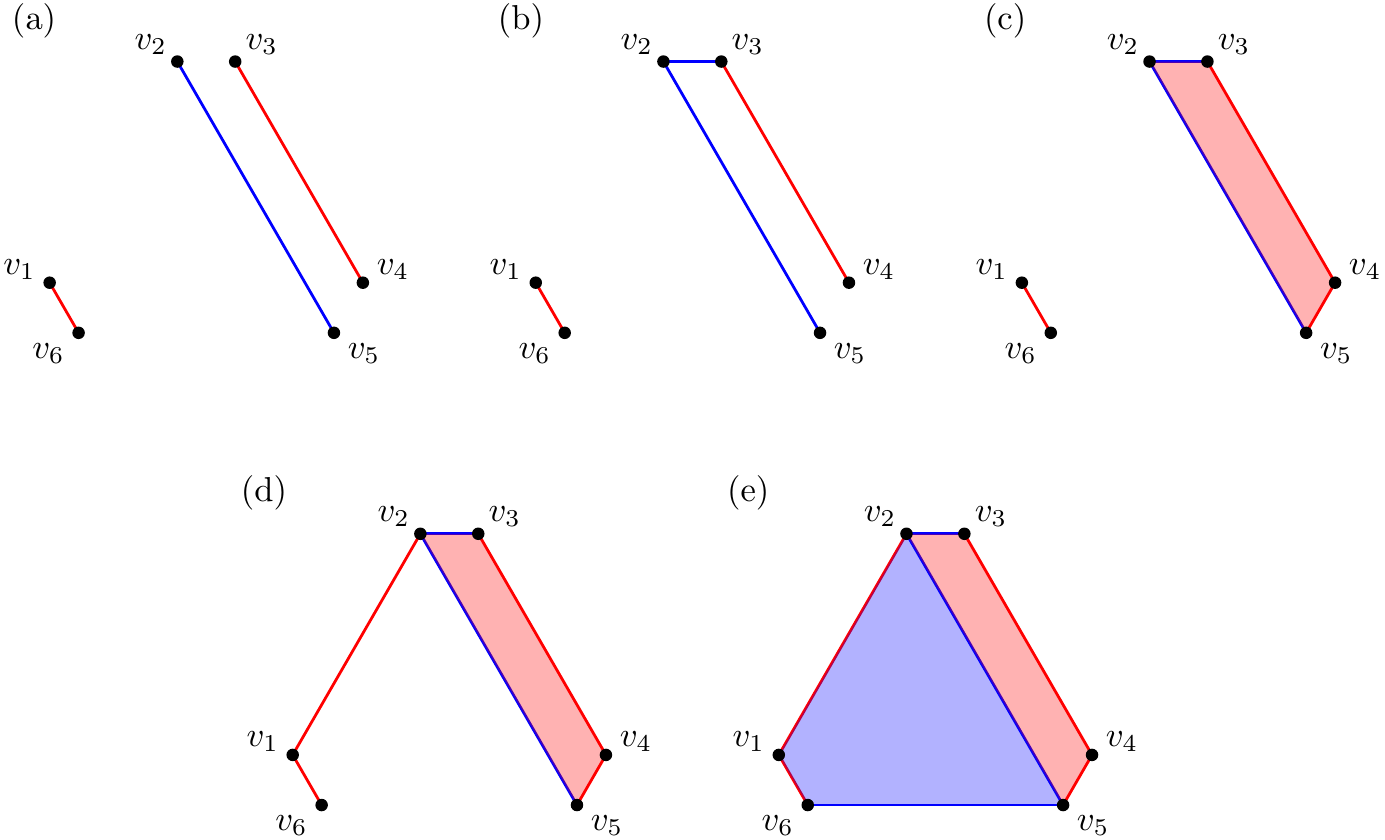}
    \caption{Player \B wins by following a simple greedy strategy.}
    \label{fig:exampleGame}
\end{figure}

\begin{theorem}
Let $P$ be a set of $n$ points in the plane in convex position. For $n = 3,5,7$ player \R can score at least half of the area, for $n =4,6$ player \B can score at least half of the area.
\end{theorem}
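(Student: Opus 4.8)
The plan is to build everything on the \emph{easy endgame} analysis introduced above. First I would isolate the following lemma: suppose the unclaimed area of a game of \polydots is partitioned into $t\ge 1$ easy subproblems of areas $A_1\ge A_2\ge\dots\ge A_t$, each having exactly two undrawn hull edges and admitting no immediate closing move; then the player who makes the \emph{last} move of the whole game scores at least $\frac12(A_1+\dots+A_t)$ by following the simple greedy strategy. The proof is short. Drawing any legal edge inside such a subproblem $R$ necessarily splits it into two pieces, each with a single undrawn edge (it cannot close anything, by assumption), and the opponent can then close both pieces on the next turn, claiming all of $R$ and being forced to open a fresh subproblem; so from this point on every turn consists of closing the subproblem opened on the previous turn and opening a new one, and the game ends exactly when the last subproblem has been opened and closed. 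Greedy opens the smallest remaining subproblem; one checks that the opponent's best reply is likewise to give away the smallest remaining subproblem (handing over a larger one only helps the last player), so the subproblems are claimed in the order $A_t,A_{t-1},\dots,A_1$. As the game has exactly $n=|P|$ turns (the earlier lemma), the last player claims every other one of these, namely $A_1,A_3,A_5,\dots$, whose sum is at least half of the total. I would also record the trivial fact that a player who can close a polygon --- in particular one facing a subproblem with a single undrawn edge (a \emph{half-open} subproblem) --- never does worse by doing so, since it adds area and leaves the opponent a structurally simpler position.

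With these tools, I would then treat $n=3,4,5,6,7$ individually, each time giving a concrete strategy for the player who moves last ($\R$ for odd $n$, $\B$ for even $n$). For $n=3$ the claim is immediate: after the two forced opening moves a triangle with one undrawn hull edge remains, and $\R$ closes it on move three, scoring everything. For $n=4$: if $\R$'s first edge lies on the convex hull, $\B$ answers with the opposite hull edge, leaving the entire quadrilateral as one fresh easy subproblem that $\R$ must open and $\B$ then closes completely; if $\R$'s first edge is a diagonal, the two triangles it creates already form a fresh easy endgame in which $\B$ plays greedy, gives away the smaller triangle, and keeps at least half. For $n=5$: $\R$ opens with a short diagonal, say $v_1v_3$, creating the corner triangle $v_1v_2v_3$ (a fresh easy subproblem) and the quadrilateral $v_1v_3v_4v_5$; a case distinction on $\B$'s reply (a hull edge of the triangle, a hull edge of the quadrilateral, or one of the legal diagonals $v_1v_4, v_3v_5$) shows that in every branch $\R$ is either handed essentially the whole board, or can first claim one half-open subproblem for free and then reach a fresh easy endgame on at most two subproblems in which $\R$ moves last --- in both situations $\R$ finishes with at least half.

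The cases $n=6$ and $n=7$ are handled the same way, but with a larger (still finite) case analysis; for $n=6$ this recovers the situation illustrated above, where $\B$ reaches a fresh easy endgame on two quadrilaterals via the long diagonal $v_2v_5$ (or a symmetric one). In general the last player maintains the invariant that the position is one ``repair'' move away from a favourable fresh easy endgame: a subproblem with three or more undrawn hull edges is repaired by drawing one of its hull edges, a half-open subproblem or any closable polygon is claimed at once, and once the whole position is a fresh easy endgame the last player switches to greedy --- and throughout, the last player only ever hands the opponent the smaller of a complementary pair of pieces. The main obstacle is verifying this invariant for $n=6$ and $n=7$: one must enumerate the opponent's early moves (hull edge adjacent to versus far from the edges already drawn, short versus long diagonal, and inside which subproblem) and, for each, exhibit the response that prevents the distance to a favourable easy endgame from growing and prevents the opponent from ever closing more than half of the total. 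The geometry needed is minimal --- every area comparison reduces to $\min(X,\,S-X)\le S/2$ for a partition of the total area $S$ that the last player chooses --- so the difficulty is entirely the bookkeeping of cases, which does not obviously extend beyond $n=7$.
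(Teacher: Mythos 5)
Your proposal follows essentially the same route as the paper: first show that the player who moves last wins any easy endgame by greedily conceding the smallest remaining subproblem (each area given away is matched by an at-least-as-large area received on the next turn), and then argue that for $n \le 7$ this player can force an easy endgame --- automatically for $n \le 4$, and via an early diagonal plus a finite case analysis for $n = 5, 6, 7$. You spell out the endgame lemma in more detail than the paper's one-line pairing argument, and you leave the $n = 6, 7$ case enumeration at the same ``finite but not written out'' level of detail as the paper itself does.
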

\begin{proof}
Consider the player that will go last (i.e., \R for odd $n$, \B for even $n$). If this player plays the \emph{simple greedy strategy} in such a way that they reach an easy endgame, then they win. Indeed, from that point onward, any time the opponent scores an area $A$, the current player will score an area that is at least as large as $A$ in their next turn. 

For $n \leq 4$, an easy endgame is always reached. For $5 \le n \le 7$, the player that will go last can enforce an easy endgame by playing a diagonal in their first move, preventing a non-easy endgame, e.g., as in Figure~\ref{fig:bad_case}.
Thus, \R can always win for $n=3,5,7$, and \B for $n=4,6$. 
\end{proof}
Thus, if $P$ is chosen such that a draw is not possible, the player that goes last wins the game. We leave the problem for $n>7$ open.

\enlargethispage{\baselineskip}



\bibliography{refs}

\clearpage

\end{document}